\documentclass{mfa}

\addtolength{\topmargin}{-.5in} 
\addtolength{\textheight}{1in}
\addtolength{\oddsidemargin}{-.5in}
\addtolength{\evensidemargin}{-.5in}
\addtolength{\textwidth}{1in}
\usepackage{mathtools}
\usepackage{amsmath}
\usepackage{amsfonts}
\usepackage{latexsym}
\usepackage{color}
\usepackage{graphicx}
\usepackage{subcaption}
\usepackage{units}
\usepackage{listings}
\usepackage{courier}
\usepackage{verbatim}
\usepackage{bigints}
\lstset{basicstyle=\ttfamily,breaklines=true}
\graphicspath{ {/home/gabe/schoolstuff/BDA} }
\newcommand\givenbase[1][]{\:#1\lvert\:}
\let\given\givenbase

\DeclarePairedDelimiterX\Basics[1](){\let\given\sgiven #1}

\newcommand{\diff}{\mathrm{d}}

\usepackage{scalerel,stackengine}
\stackMath
\newcommand\reallywidehat[1]{\savestack{\tmpbox}{\stretchto{  \scaleto{    \scalerel*[\widthof{\ensuremath{#1}}]{\kern-.6pt\bigwedge\kern-.6pt}    {\rule[-\textheight/2]{1ex}{\textheight}}  }{\textheight}}{0.5ex}}\stackon[1pt]{#1}{\tmpbox}}
\parskip 1ex

\begin{document}

\title[Hierarchical Bayesian Bradley-Terry for Major League Baseball]
{Hierarchical Bayesian Bradley-Terry\\
  for Applications in Major League Baseball}
\author[Phelan]{Gabriel C. Phelan}
\address{School of Mathematical Sciences, Rochester Institute of Technology, 85 Lomb Memorial Drive, Rochester, New York 14623}
\email{gxp3900@rit.edu}

\author[Whelan]{John T. Whelan}
\address{School of Mathematical Sciences, Rochester Institute of Technology, 85 Lomb Memorial Drive, Rochester, New York 14623}
\email{jtwsma@rit.edu}

\keywords{Bayesian Inference, Major League Baseball, MCMC, Probabilistic Modeling}
\subjclass{62F15}{62F07}

\setlength\parindent{0pt}
\setlength{\overfullrule}{0pt}

\newtheorem*{theorem}{Theorem}
\newtheorem*{lemma}{Lemma}
\newtheorem*{example}{Example}
\newtheorem*{assumption}{Assumption}
\newtheorem*{proposition}{Proposition}

\begin{abstract}
 A common problem faced in statistical inference is drawing conclusions from paired comparisons, in which two objects compete and one is declared the victor. A probabilistic
approach to such a problem is the Bradley-Terry model \cite{BT, Zermelo}, first studied by Zermelo in 1929 and rediscovered by Bradley and Terry in 1952. One obvious area of application
  for such a model is sporting events, and in particular Major League Baseball. With this in mind, we describe a hierarchical Bayesian 
  version of Bradley-Terry suitable for use in ranking and prediction problems, and compare results from these application domains to standard maximum likelihood approaches.  
  Our Bayesian methods outperform the MLE-based analogues, while being simple to construct, implement, and interpret.
\end{abstract}

\maketitle

\section{Background}
\subsection{The Bradley-Terry Model}
Amongst a set of $N$ objects, which we will call ``teams'',  the Bradley-Terry model associates a ``strength'' $\pi_i \in \mathbb{R}^+$ to each team and assumes that
\begin{align}
\begin{split}
  \mathbb{P}\{\text{team }i  \text{ defeats team } j\} =\frac{\pi_i}{\pi_i + \pi_j}, 
\end{split}
\end{align}
where $i,j \in \{1,2,\ldots,N\}$. If we define $V_{ij}$ to be the number of times in a season that team $i$ defeats team $j$, and $n_{ij}$ the number of games between them, an entire season can be described in
terms of the probability mass function
\begin{equation}
  p(\mathbf{V} \given \boldsymbol{\pi}) = \prod_{i=1}^{N-1} \prod_{j=i+1}^N \binom{n_{ij}}{V_{ij}} \left ( \frac{\pi_i}{\pi_i + \pi_j} \right )^{V_{ij}} \left ( \frac{\pi_j}{\pi_i + \pi_j} \right )^{V_{ji}}, 
\end{equation}
where $\mathbf{V}_{N \times N}$ is a matrix of records and $\boldsymbol{\pi}_{N \times 1}$ a vector of team strengths.
\subsection{The Bayesian Approach}
In practice, it is often the case that $\mathbf{V}$ is known and the goal is to
perform inference on the strengths $\boldsymbol{\pi}$. In a frequentist interpretation, this would proceed by maximum likelihood estimation, for which there is no closed-form solution
but a number of numerical algorithms have been suggested \cite{Csiszar, Ford, Hunter}. However, as noted by Ford \cite{Ford}, pathologies may exist under this approach. Maximum likelihood
ratios of some teams' strengths may
be zero, infinite, or undetermined, leading to 0, 1, or undetermined
probabilities.
The conditions under which these pathologies arise has been studied by various researchers \cite{AA, WB, SD}. Settings like Major League Baseball (unlike, for instance, college football) are practically guaranteed immunity from these issues \cite{WB}, but taking
a Bayesian approach fully guards against them.\\ \\
In a Bayesian interpretation of the model, we place a prior distribution $p(\boldsymbol{\pi})$ over the strengths, avoiding the aforementioned difficulties. It
also affords us the use of full Bayesian inference, in which we compute a posterior distribution $p(\boldsymbol{\pi} \given \mathbf{V})\propto p(\mathbf{V} \given \boldsymbol{\pi})p(\boldsymbol{\pi})$
over the team strengths in light of the records. $p(\boldsymbol{\pi} \given \mathbf{V})$ captures and quantifies all of our uncertainty conditional on our knowledge. Various takes on Bayesian Bradley-Terry have been studied \cite{Caron, Chen, DS, Leonard, Whelan}; a common concern is
the choice of prior distribution. In this regard we draw on the work of Whelan, who advocates for two classes of distributions in particular \cite{Whelan}.

\subsection{Desiderata and Choice of Prior Distribution}
In specifying a prior for Bayesian Bradley-Terry, one approach is to require that it adheres to a list of \textit{desiderata}. These formalize our intuition about how the model should 
behave under a suitable prior distribution. We adopt the desiderata of Whelan \cite{Whelan}, which, with applications to ranking systems in mind,
attempts to construct priors that make no unfair distinctions between individual teams. Roughly speaking, this means that
we should choose a prior, possibly over a transformed parameter $\mathbf{T}(\boldsymbol{\pi})$, that:
\vskip 3mm
\label{desi}
\begin{center}
  \begin{enumerate}
  \item Ensures invariance under the interchange of teams.
  \item Ensures invariance under the interchange of winning and losing.
  \item Ensures invariance under the elimination of teams.
    \item Is a proper (normalizable) prior. 
  \end{enumerate}
\end{center}
\vskip 3mm
Most families of prior distribution fail at least one of these requirements, but there are two families that are known to satisfy all four \cite{Whelan}.
The first is a separable Gaussian
distribution in the log-strengths with 0 mean and common variance:
\label{prior:gauss}
\begin{align}
  \begin{split}
  \lambda_i \sim \mathcal{N}(0, \sigma^2), \hspace{5mm} \lambda_i = T_1(\pi_i) =\log \pi_i. 
  \end{split}
\end{align}
The second is a Beta distribution in what can be interpreted as the probability of a particular team defeating an ``imaginary opponent'' of unit strength, with common scale and shape parameters:
 \label{prior:beta}
\begin{equation}
\zeta_i \sim \beta(\eta, \eta), \hspace{5mm} \zeta_i = T_2(\pi_i)=\frac{\pi_i}{1+\pi_i}.
\end{equation}
It is clear based on these definitions that $\boldsymbol{\lambda} \in \mathbb{R}^N$ and $\boldsymbol{\zeta} \in (0,1)^N$. We consider only these two families of prior distribution, denoted
as $I_{\mathcal{N}}$ and $I_{\beta}$, guaranteeing the desiderata are satisfied.
As we proceed, we will estimate posterior densities using Markov Chain Monte Carlo (MCMC), which is known to prefer unconstrained parameter spaces \cite{Stan}. It is therefore helpful to transform
the prior on $\boldsymbol{\zeta}$ into $\boldsymbol{\lambda}$-space. This parameterization will also ease any comparisons we may wish to make between the two priors.

\begin{lemma} $\lambda_i \given I_{\beta}$ has the generalized logistic distribution of the third kind, which we denote as $\lambda_i \given I_{\beta} \sim \mathrm{GL}_3(\eta)$.
  \label{changevars}
\end{lemma}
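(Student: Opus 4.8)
The plan is to derive the density of $\lambda_i \mid I_\beta$ by a single change of variables from $\zeta_i$, after observing that the two transformations $T_1$ and $T_2$ compose into the logit map. First I would eliminate $\pi_i$ between $\zeta_i = \pi_i/(1+\pi_i)$ and $\lambda_i = \log \pi_i$: solving the former for $\pi_i = \zeta_i/(1-\zeta_i)$ and substituting gives $\lambda_i = \log\!\bigl(\zeta_i/(1-\zeta_i)\bigr) = \operatorname{logit}(\zeta_i)$, or inversely $\zeta_i = e^{\lambda_i}/(1+e^{\lambda_i})$. This is a smooth, strictly increasing bijection from $(0,1)$ onto $\mathbb{R}$, so the transformation is legitimate and no boundary terms or multivaluedness intervene.

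Next I would apply the univariate change-of-variables formula $p(\lambda_i \mid I_\beta) = p(\zeta_i \mid I_\beta)\,\lvert \diff\zeta_i/\diff\lambda_i\rvert$ to the Beta density $p(\zeta_i \mid I_\beta) = B(\eta,\eta)^{-1}\,\zeta_i^{\eta-1}(1-\zeta_i)^{\eta-1}$. Using $\zeta_i = e^{\lambda_i}/(1+e^{\lambda_i})$, $1-\zeta_i = 1/(1+e^{\lambda_i})$, and $\diff\zeta_i/\diff\lambda_i = e^{\lambda_i}/(1+e^{\lambda_i})^2$, all the powers of $(1+e^{\lambda_i})$ combine and the expression collapses to
\begin{equation}
  p(\lambda_i \mid I_\beta) = \frac{1}{B(\eta,\eta)}\,\frac{e^{\eta\lambda_i}}{(1+e^{\lambda_i})^{2\eta}}
  = \frac{1}{B(\eta,\eta)}\,\frac{e^{-\eta\lambda_i}}{(1+e^{-\lambda_i})^{2\eta}},
\end{equation}
the last equality coming from multiplying numerator and denominator by $e^{-2\eta\lambda_i}$. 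I would then recognize this as precisely the standard density of the generalized logistic distribution of the third kind with parameter $\eta$, which finishes the proof.

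There is no real obstacle here beyond bookkeeping: one must get the Jacobian right and state the result in whatever normalization of $\mathrm{GL}_3$ the paper adopts — the symmetric form on the right above is the convenient one, since it makes the invariance under $\lambda_i \mapsto -\lambda_i$ (desideratum~(2)) manifest and exhibits $\eta$ as the single shape parameter. As a sanity check one can invoke the representation $\lambda_i = \log X - \log Y$ with $X, Y \overset{\mathrm{iid}}{\sim} \mathrm{Gamma}(\eta,1)$, since then $\zeta_i = X/(X+Y) \sim \beta(\eta,\eta)$; this independently confirms that the resulting law is the location-symmetric member of the logistic family claimed. The direct change of variables is nonetheless the shortest route and the one I would present.
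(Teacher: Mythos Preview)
Your proposal is correct and follows essentially the same route as the paper: both arguments identify $\zeta_i = e^{\lambda_i}/(1+e^{\lambda_i})$, compute the Jacobian $\diff\zeta_i/\diff\lambda_i = e^{\lambda_i}/(1+e^{\lambda_i})^2$, substitute into the symmetric Beta density, and simplify to $\bigl[e^{\lambda_i}/(1+e^{\lambda_i})^2\bigr]^{\eta}$. Your write-up is slightly more careful about the bijection and the normalizing constant, and the Gamma-ratio sanity check is a nice extra, but the core argument is identical.
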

\begin{proof}
For each $\zeta_i$ we have
\begin{equation}
  p(\zeta_i \given I_\beta ) \propto [\zeta_i(1-\zeta_i)]^{\eta-1}
\end{equation}
and
\begin{equation}
  \zeta_i = \frac{e^{\lambda_i}}{1+e^{\lambda_i}}=\mathrm{logistic}(\lambda_i).
\end{equation}
Thus,
\begin{equation}
  \left | \frac{\diff \zeta_i}{\diff \lambda_i} \right |  = \frac{e^{\lambda_i}}{(1+e^{\lambda_i})^2},
\end{equation}
and by change of variables,
\begin{equation}
  \begin{split}
  p(\lambda_i \given I_{\beta}) & \propto \left [
       \frac{e^{\lambda_i}}{1+e^{\lambda_i}} \left ( 1- \frac{e^{\lambda_i}}{1+e^{\lambda_i}} \right )
       \right ]^{\eta-1}
  \frac{e^{\lambda_i}}{(1+e^{\lambda_i})^2} \\
  & \propto \left [ \frac{e^{\lambda_i}}{(1+e^{\lambda_i})^2} \right ]^{\eta-1} \frac{e^{\lambda_i}}{(1+e^{\lambda_i})^2} \\
  & \propto \left [ \frac{e^{\lambda_i}}{(1+e^{\lambda_i})^2} \right ]^\eta,
  \end{split}
\end{equation}
which is the form of a $\mathrm{GL_3}$ distribution. 
\end{proof}
This family of distributions is well-known, the most general form of which is given by
\begin{align}
  \begin{split}
  p(\lambda_i \given \varphi, \eta, \gamma) = \frac{1}{\mathrm{B}(\gamma, \eta)} \left ( \frac{\varphi e^{-\varphi \eta \lambda_i}}{(1+e^{-\varphi \lambda_i})^{\eta+\gamma}} \right ),
  \end{split}
\end{align}
where $\mathrm{B}(\gamma,\eta) = \frac{\Gamma(\gamma) \Gamma(\eta)}{\Gamma(\gamma + \eta)}$ is the Beta function. We say that $\lambda_i \sim \mathrm{GL}(\varphi, \eta, \gamma)$.
A complete overview is given in \cite{NE}, where the following useful properties are shown:
\begin{equation}
  \begin{split}
    \mathbb{E}[\lambda_i \given \varphi, \eta, \gamma] & = \frac{1}{\varphi}[\psi(\gamma)-\psi(\eta)] \\
    \mathbb{V}[\lambda_i \given \varphi, \eta, \gamma] & = \frac{1}{\varphi^2}[\psi^{\prime}(\gamma)+\psi^{\prime}(\eta)],
  \end{split}
\end{equation}
for $\psi(\cdot)$ and $\psi^{\prime}(\cdot)$ the digamma and trigamma functions respectively.   
It is evident that the $\mathrm{GL}_3(\eta)$ distribution is equivalent to the $\mathrm{GL}(1,\eta,\eta)$ distribution, and so we immediately have
\begin{equation}
    \mathbb{E}[\lambda_i \given I_{\beta}]  = 0 \hspace{5mm}\text{and}\hspace{5mm}
    \mathbb{V}[\lambda_i \given I_{\beta}]  = 2\psi^{\prime}(\eta).
\end{equation}
For reference, $\eta=1$, which would produce a uniform prior in $\zeta_i$, corresponds to a Gaussian-like prior with variance $2\psi^{\prime}(1) \approx 3.3$ in $\lambda_i$.
With $p(\lambda_i \given I_{\beta})$ established, we restrict our discussion to $\boldsymbol{\lambda}$-space from here onward. 
\begin{figure}
    \centering
    \begin{subfigure}[b]{0.5\textwidth}
        \includegraphics[width=\textwidth]{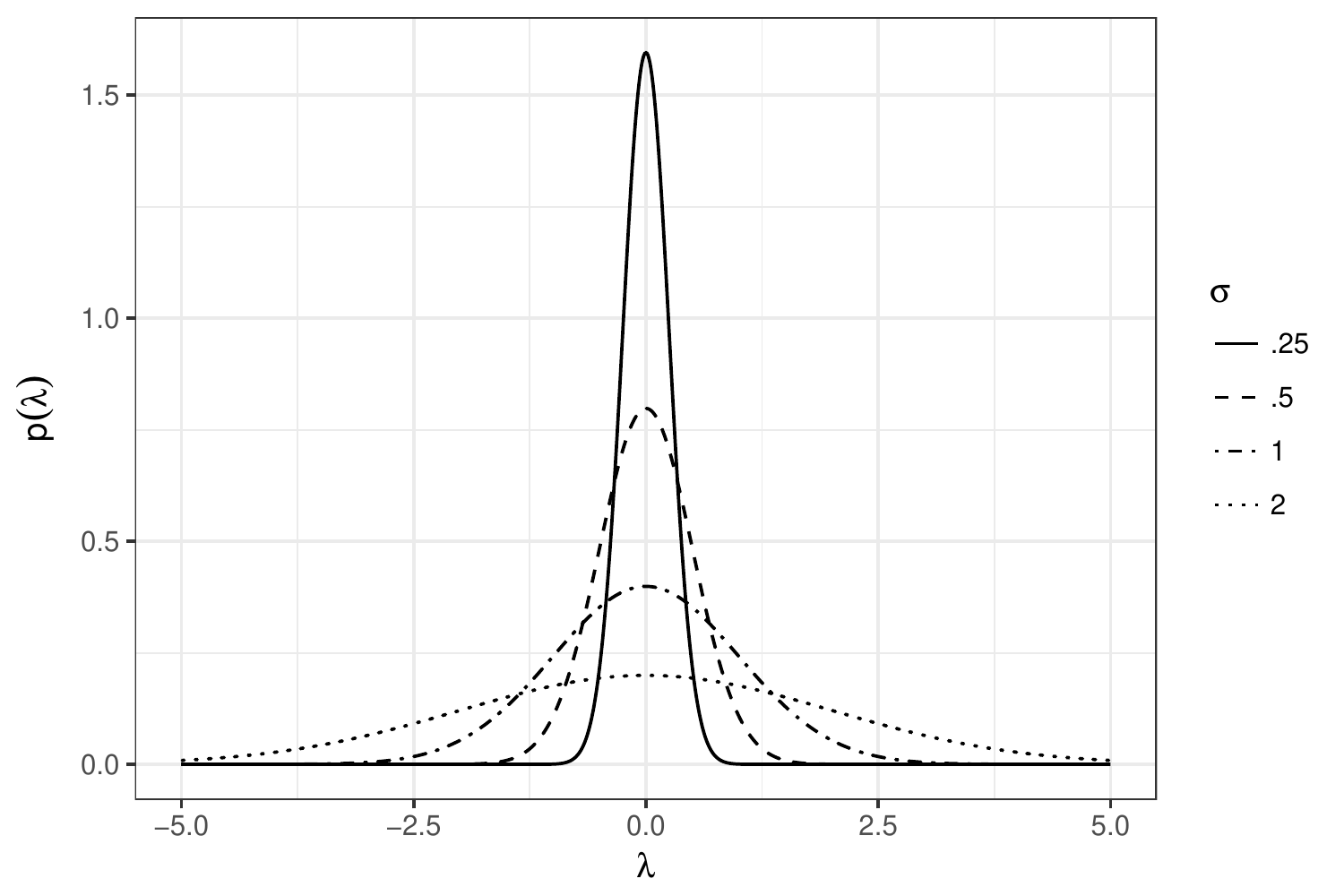}
        \caption{$p(\lambda_i \given I_{\mathcal{N}})$ for different values of $\sigma$ (we use $\sigma$ due to R's parameterization of the Gaussian distribution).}
        \label{fig:gp}
    \end{subfigure}
    ~           \begin{subfigure}[b]{0.5\textwidth}
        \includegraphics[width=\textwidth]{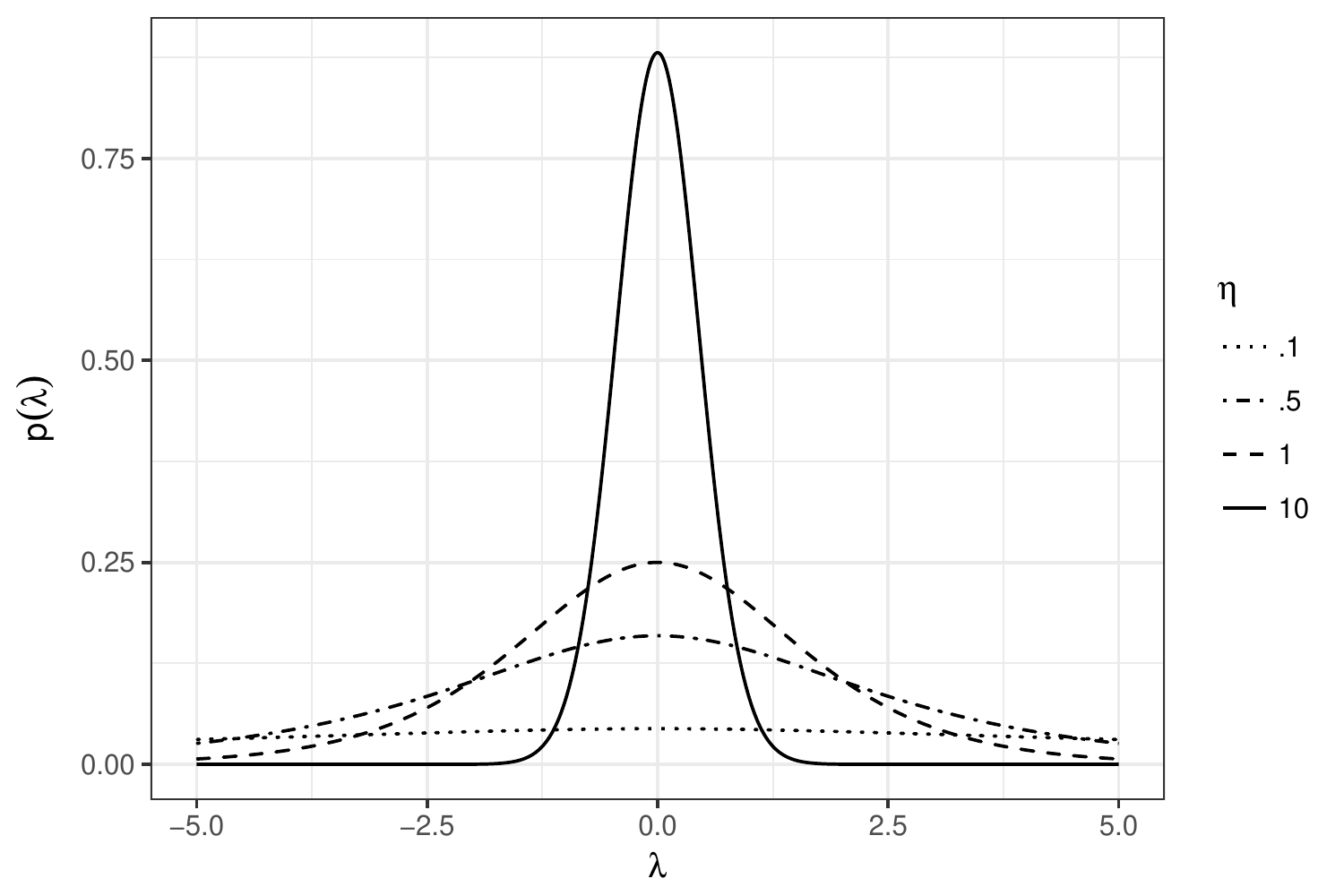}
        \caption{$p(\lambda_i \given I_{\beta})$ for different values of $\eta$. This family is known as the type III generalized logistic distribution.}
        \label{fig:tbp}
    \end{subfigure}
    \caption{Prior distributions in $\lambda_i$ that satisfy the desiderata described in \ref{desi}}
    \label{fig:priors}
\end{figure}

\section{The Model}
\label{motiv}
\subsection{Motivation} 
Recent advances in Bayesian computation mean that for models of a reasonable size,
we no longer have to restrict ourselves to using conjugate priors, Laplace approximations, or hand-tuned hyperparameters. Modern
MCMC methods such as Hamiltonian Monte Carlo (HMC) allow for rich hierarchical models that can be implemented easily in probabilistic programming languages such as Stan \cite{Stan}. These efficient
MCMC algorithms mean we can integrate over our uncertainty in the form of expectations, which, from a Bayesian perspective,
are preferable to optimization-based procedures \cite{Betancourt2}. Thus, we take the
stance that hierarchical models are the ideal way to approach Bayesian inference, especially now that the computational tools to exploit such models exist. The hallmark of hierarchical
modeling is to place additional priors over hyperparameters of interest. In our case, this will mean deriving a suitable prior distribution for $\eta$ or $\sigma$. \\ \\
The model is informed by our applications, which are discussed in the final section. We wish to exploit the advantages of Bayesian inference, while remaining objective
in our treatment of individual teams. This will lead us to a weakly-informative hierarchy that encompasses both
the objective and subjective approaches to the Bayesian paradigm. As seen in section \ref{hyper}, we incorporate prior information that pertains to the league as a whole at the uppermost layer
of the model, but insist that teams be evaluated only based on their performance against one another. Bradley-Terry provides an ideal framework for such a philosophy since it relies solely on
the $\binom{30}{2}$ head-to-head records. This affords us a rich framework for handling uncertainty, while keeping the model primarily data-driven.   

\subsection{Likelihood}
We restate the likelihood, this time in terms of $\boldsymbol{\lambda}$:
\begin{align}
  \begin{split}
  p(\mathbf{V} \given \boldsymbol{\lambda}) & = 
  \prod_{i=1}^{N-1} \prod_{j=i+1}^N \binom{n_{ij}}{V_{ij}} \left ( \frac{e^{\lambda_i}}{e^{\lambda_i} + e^{\lambda_j}} \right )^{V_{ij}} \left ( \frac{e^{\lambda_j}}{e^{\lambda_i}+e^{\lambda_j}} \right )^{V_{ji}} \\
  & \propto \prod_{i=1}^{N} \prod_{j=1}^N \left ( \frac{e^{\lambda_i}}{e^{\lambda_i} + e^{\lambda_j}} \right )^{V_{ij}}.
  \end{split}
\end{align}

\subsection{Choosing between $I_{\beta}$ and $I_{\mathcal{N}}$}
One initial complication is that there is no ``obvious'' way to choose between $I_{\mathcal{N}}$ and $I_{\beta}$. Figure \ref{fig:priors} illustrates the similarity between the densities of the
two families, a fact that is discussed in \cite{NE}. $I_{\mathcal{N}}$ is attractive on the grounds that Gaussians are often easy to work with, but we can also motivate its usage by appealing to the
principle of maximum entropy. Note that the variance of the $\lambda_i$ should by prior-independent; that is $\mathbb{V}[\lambda_i \given I_{\mathcal{N}}] = \mathbb{V}[\lambda_i \given I_{\beta}]$. 
Thus the first two moments are
fixed at $0$ and $\sigma^2$ respectively. It is well known that under these circumstances the differential entropy
\begin{equation}
  \mathbb{H}(\lambda_i) = - \int_{\mathbb{R}}  p(\lambda_i) \log p(\lambda_i)\mathrm{d}\lambda_i
\end{equation}
is maximized for $\lambda_i \sim \mathcal{N}(0, \sigma^2)$. Equivalently, this maximizes the relative entropy\footnote{The relative entropy is defined to be
  $\mathbb{H}_{R}(x) = - \int_{\Omega}  p(x) \log \left ( \frac{p(x)}{m(x)} \right ) \mathrm{d}x$, where $\Omega$ is the support space and $m$ is an invariant measure, meaning it transforms like
  $p$ under a change of variables.} under a uniform measure. So, in this sense $I_{\mathcal{N}}$ carries less information about $\lambda_i$, and we select it for this reason.

\label{hyper}

\subsection{Hyperparameters and Hyperpriors}
In hierarchical modeling, one foregoes the  hand-tuning of hyperparameters and instead builds another layer of prior distributions into the model, called hyperpriors. This creates the added difficulty of
determining good hyperpriors to use. Unfortunately, there is rarely a principled approach to determining this final layer of the model (we again reject the use improper priors,
to ensure the stability of MCMC methods \cite{Betancourt}). Often, the construction is made through some use of maximum likelihood estimation.
\\ \\ 
We will take the approach of using prior seasons' data to produce a
hyperprior for the hyperparameter $\sigma$.  In principle, one could
carry out MCMC on a previous season with a weakly informative
hyperprior, and produce a marginalized posterior for $\sigma$ which
could serve as a hyperprior for a subsequent season.  We opt for a
computationally-simpler approach based on an approximate maximum a
posteriori expansion.  The result will be a point estimate
$\widehat{\sigma}$ with an associated variance $\widehat{\varsigma}$.
Rather than a Gaussian approximation for the posterior on $\sigma$
(which would extend to negative values of $\sigma$), we instead choose a Gamma
distribution (using the shape and rate parameterization) with the same mean $\widehat{\sigma}$ and variance
$\widehat{\varsigma}$, i.e., $\sigma \sim \Gamma \left (
  \frac{\widehat{\sigma}^2}{\widehat{\varsigma}},
  \frac{{\widehat{\sigma}}}{\widehat{\varsigma}} \right)$.
  \\ \\
Formally assuming a uniform prior on $\sigma$, the log-posterior can be written
\begin{equation}
  \ell = \log p( \boldsymbol{\lambda},\sigma \given \mathbf{V} )
  = \sum_{i=1}^N
  \left\{
    \sum_{j=1}^N V_{ij}
    \left[
      \lambda_i - \log (e^{\lambda_i}+e^{\lambda_j})
    \right]
    - \frac{\lambda_i^2}{2\sigma^2}
  \right\}
  - N\log\sigma
  + \text{const.}
\end{equation}
and the MAP point can be found by taking the partial derivatives
\begin{equation}
  \frac{\partial\ell}{\partial\sigma}
  = \sigma^{-3} \sum_{i=1}^N \lambda_i^2
  - N\sigma^{-1}
  \hspace{8mm} \text{and} \hspace{8mm}
  \frac{\partial\ell}{\partial\lambda_i}
  = \sum_j \left(
    V_{ij}
    + n_{ij} \frac{e^{\lambda_i}}{e^{\lambda_i}+e^{\lambda_j}}
  \right)
  - \sigma^{-2} \lambda_i.
\end{equation}
Setting these to zero and rearranging produces the coupled MAP equations
\begin{equation}
  \widehat{\sigma} = \sqrt{\frac{1}{N}\sum_{i=1}^N \widehat{\lambda}_i^2}
  \hspace{8mm} \text{and} \hspace{8mm}
  \widehat{\lambda}_i = \log
  \left \{
    \frac{
      V_i-\widehat{\lambda}_i/\widehat{\sigma}^2
    }{
      \sum_{j=1}^N \left (
        n_{ij} \bigg /
        \left [
          e^{\widehat{\lambda}_i}
          + e^{ \widehat{\lambda}_j}
        \right ]
      \right )}
  \right \}
\end{equation}
where $V_i=\sum_{j=1}^N V_{ij}$ is the total number of games won by
team $i$.  These MAP equations could be solved iteratively by a method
analogous to that of Ford \cite{Ford}, but we make the assumption
that, with each team playing 162 games in a full season, $V_i$ is
large compared to $\widehat{\lambda}_i/\widehat{\sigma}^2$ and we can
use the maximum likelihood estimates
$\left \{\widehat{\lambda}_i^{\text{MLE}}\right \}$, determined by iteratively
solving
\begin{equation}
  \label{e:MLE}
  \widehat{\lambda}_i^{\text{MLE}} = \log \left \{ \frac{V_i}{ \sum_{j=1}^N \left ( n_{ij} \bigg / \left [ e^{\widehat{\lambda}_i^{\text{MLE}}} + e^{ \widehat{\lambda}_j^{\text{MLE}}} \right ]
      \right )}
  \right \}
  \ 
\end{equation}
in place of the $\left \{\widehat{\lambda}_i \right \}$, and writing
\begin{equation}
  \label{simple}
  \widehat{\sigma}
  \approx
  \sqrt{\frac{1}{N}\sum_{i=1}^N \left(\widehat{\lambda}_i^{\text{MLE}}\right)^2}.
\end{equation}
Note that the maximum likelihood equations \eqref{e:MLE} only
determine the $\left \{\widehat{\lambda}_i^{\text{MLE}} \right \}$
up to an overall additive constant, which we set by requiring
$\sum_{i=1}^N\widehat{\lambda}_i^{\text{MLE}}=0$. The variance $\widehat{\varsigma}$ can be estimated by considering the
matrix of second derivatives
\begin{equation}
  \mathbf{H}
  =
  \begin{pmatrix}
    -\frac{\partial^2\ell}{\partial^2\sigma}
    & \left \{-\frac{\partial^2\ell}{\partial\sigma\,\partial\lambda_j} \right \} \\
    \left \{-\frac{\partial^2\ell}{\partial\lambda_i\,\partial\sigma}\right \}
    &\left \{-\frac{\partial^2\ell}{\partial\lambda_i\,\partial\lambda_j}\right \}
  \end{pmatrix}_{\sigma=\widehat{\sigma};
    \,\boldsymbol{\lambda}=\widehat{\boldsymbol{\lambda}}}
\end{equation}
and defining $\widehat{\varsigma}=\left[\mathbf{H}^{-1}\right]_{\sigma\sigma}$.
The second derivatives are
\begin{subequations}
  \begin{gather}
    H_{\sigma\sigma}
    = 3\widehat{\sigma}^{-4} \sum_{i=1}^N\widehat{\lambda}_i^2
    - N\widehat{\sigma}^{-2} = 2N\widehat{\sigma}^{-2}
    \\
    H_{\sigma\lambda_i}
    = 2\widehat{\lambda}_i\widehat{\sigma}^{-3}
    \\
    H_{\lambda_i\lambda_j}
    = \delta_{ij} \left(
      \widehat{\sigma}^{-2}
      + \sum_{k=1}^N n_{ik}
      \frac{e^{\widehat{\lambda}_i+\widehat{\lambda}_k}}
      { e^{\widehat{\lambda}_i} + e^{ \widehat{\lambda}_k} }
    \right)
    - n_{ij}\frac{e^{\widehat{\lambda}_i+\widehat{\lambda}_j}}
    { e^{\widehat{\lambda}_i} + e^{ \widehat{\lambda}_j} }.
  \end{gather}
\end{subequations}
In practice, given a whole season's worth of data, we don't need to
invert the full matrix; the terms involving $\{n_{ij}\}$ will dominate
to leading order, and we can approximate the matrix as
block-diagonal\footnote{Note that the $\widehat{\sigma}^{-2}$ is
  important for inversion of the other block, which is otherwise
  degenerate since $\sum_{j=1}^N
  H_{\lambda_i\lambda_j}=\widehat{\sigma}^{-2}$.} to write
\begin{equation}
  \widehat{\varsigma} \approx \frac{1}{H_{\sigma\sigma}}
  = \frac{\widehat{\sigma}^2}{2N}.
\end{equation}
\\ \\
Following this procedure, we arrive at a $\Gamma \left( 2N,(2N) \big / {\widehat{\sigma}^2} \right)$ hyperprior. In keeping with the Bayesian philosophy, we avoid using the data from the season to be modeled
in setting that season's hyperprior. Instead, we account for any trends in league parity by constructing the hyperprior using the previous season's data. Since Major League Baseball has consisted of $30$
teams throughout the seasons we model, the hyperprior is $\Gamma \left( 60,60 \big / {\widehat{\sigma}^2} \right)$ where $\widehat{\sigma}^2$ is the estimated variance of the $\{\lambda_i\}$ during
the previous season.
  \begin{table}[ht]
    \label{sigmatable}
  \centering
  \caption{The estimates $\widehat{\sigma}$ and $\sqrt{\hat{\varsigma}}$ for the 2010 - 2016 seasons, computed according to the above prescription. We construct the hyperprior for a
  a given season using the estimates from the previous season. $\sqrt{\hat{\varsigma}}$
    can be interpreted as one standard deviation of uncertainty in $\widehat{\sigma}$.} 
\begin{tabular}{rcc}
  \hline
 year & $\widehat{\sigma}$ & $\sqrt{\hat{\varsigma}}$ \\ 
  \hline
  2010 & 0.264 & 0.034 \\
  2011 & 0.267 & 0.034 \\ 
  2012 & 0.316 & 0.041 \\ 
  2013 & 0.289 & 0.037 \\ 
  2014 & 0.235 & 0.030 \\ 
  2015 & 0.274 & 0.035 \\ 
  2016 & 0.262 & 0.034 \\ 
   \hline
\end{tabular}
\end{table}

  \begin{figure}
    \centering
        \includegraphics[width=.6\textwidth]{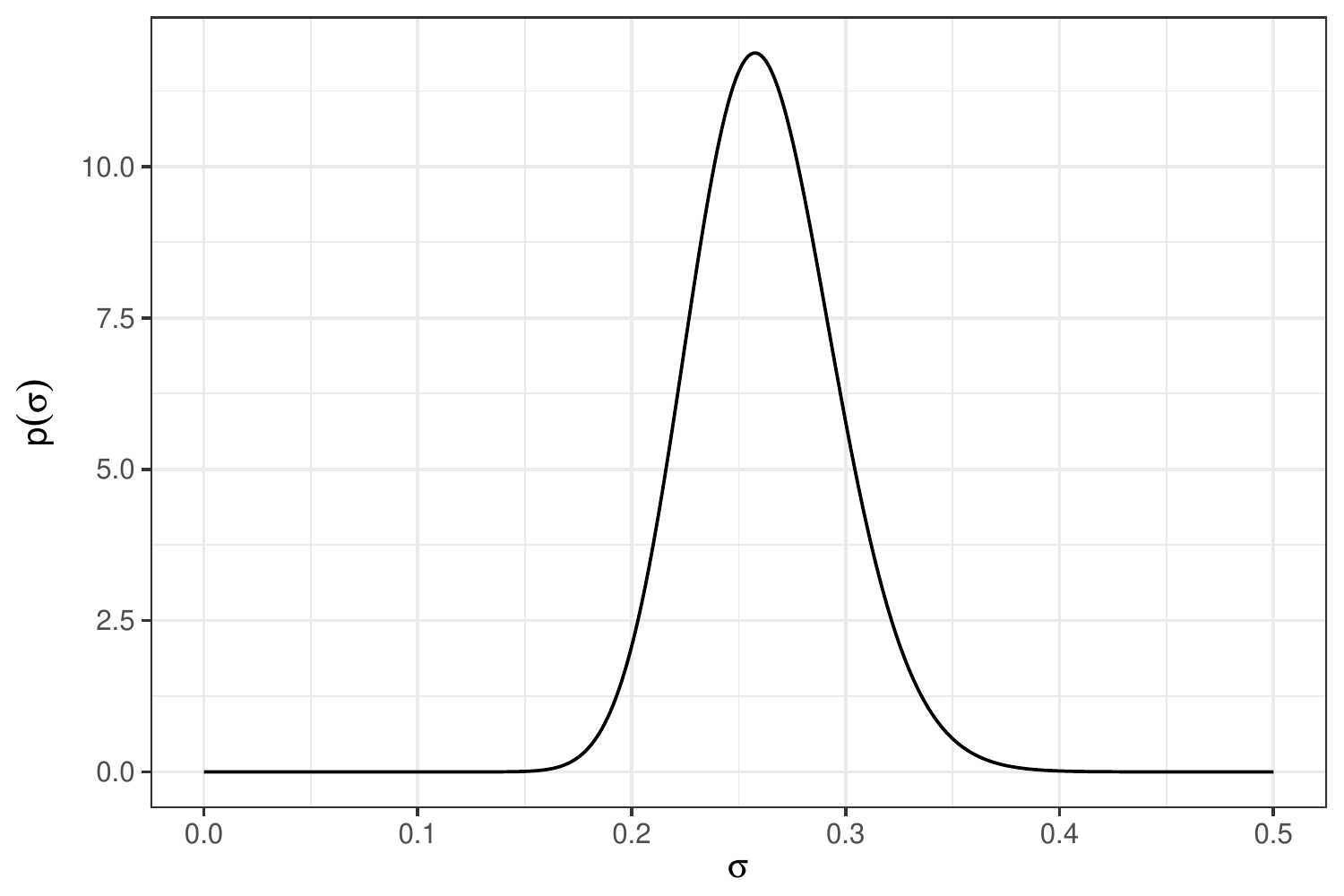}
        \caption{The hyperprior $p(\sigma)$ used to model the 2017 season, generated according to estimates from the 2016 season.}
    \label{fig:hyperpriors}
\end{figure}

\begin{table}[ht]
  \centering \caption{The estimates for $\left \{ \widehat{\lambda}_i^{\text{MLE}}\right \}$ and $\left \{ \widehat{\lambda}_i^{\text{MAP}}\right \}$ during the 2017 season,
    computed according to Ford's iterative algorithm. Note that
    the values are indeed close, justifying the simplification made in equation \ref{simple}. Teams in \textbf{boldface} made the postseason.}
\begin{tabular}[t]{rlrr}
  \hline
 & team & ${\widehat{\lambda}_i}^{\text{MLE}}$ & ${\widehat{\lambda}_i}^{\text{MAP}}$ \\ 
  \hline
 1 & \textbf{CLE} & 0.52 & 0.47 \\ 
 2 &  \textbf{HOU} & 0.51 & 0.46 \\ 
 3 &  \textbf{LAN} & 0.50 & 0.45 \\ 
 4 &  \textbf{BOS} & 0.33 & 0.29 \\ 
 5 &  \textbf{NYA} & 0.29 & 0.26 \\ 
 6 &  \textbf{WAS} & 0.26 & 0.25 \\ 
 7 & \textbf{ARI} & 0.26 & 0.25 \\ 
 8 &  \textbf{CHN} & 0.19 & 0.19 \\ 
 9 &  \textbf{MIN} & 0.13 & 0.12 \\ 
 10 & \textbf{COL} & 0.12 & 0.11 \\ 
 11 &  MIL & 0.07 & 0.08 \\ 
 12 & TBA & 0.05 & 0.03 \\ 
 13 & LAA & 0.03 & 0.02 \\ 
 14 & KCA & 0.02 & 0.02 \\ 
 15 & SLN & 0.00 & 0.01 \\
 \hline
\end{tabular}
\begin{tabular}[t]{rlrr}
  \hline
 & team & ${\widehat{\lambda}_i}^{\text{MLE}}$ & ${\widehat{\lambda}_i}^{\text{MAP}}$ \\ 
  \hline
 16 & SEA & -0.02 & -0.03 \\ 
 17 & TEX & -0.03 & -0.04 \\ 
 18 & TOR & -0.04 & -0.06 \\ 
 19 &  BAL & -0.06 & -0.07 \\ 
 20 & OAK & -0.09 & -0.10 \\ 
 21 & PIT & -0.18 & -0.15 \\ 
 22 & MIA & -0.19 & -0.15 \\ 
 23 &  SDN & -0.25 & -0.22 \\ 
 24 &  CHA & -0.27 & -0.26 \\ 
 25 &  ATL & -0.30 & -0.26 \\ 
 26 &  CIN & -0.33 & -0.29 \\ 
 27 &  NYN & -0.34 & -0.29 \\ 
 28 &  DET & -0.34 & -0.33 \\ 
  29 & SFN & -0.41 & -0.36 \\ 
  30 & PHI & -0.43 & -0.37 \\ 
   \hline
\end{tabular}
\end{table}

\subsection{Full Model}
The following describes the full Bayesian hierarchical model in matrix notation:
\begin{equation}
  \begin{split}
    \sigma & \sim \Gamma \left( 2N, \frac{2N}{\widehat{\sigma}^2} \right) \\
    \boldsymbol{\lambda} \given \sigma & \sim \mathcal{N}(\mathbf{0}, \sigma^2 \mathbf{I}) \\
    \mathbf{V} \given \boldsymbol{\lambda} & \sim \text{Bradley-Terry}(\exp\{\boldsymbol{\lambda}\}).
    \end{split}
\end{equation}
\section{Applications to Major League Baseball}
\label{apps}

\subsection{A Word on Data Acquisition and Computation}
The present authors have obtained all data from \verb=baseballreference.com= \cite{BR} and \verb=retrosheet.org= \cite{Retrosheet}.
Modifications of data and numerical computations were performed in the R and Stan programming languages \cite{R, Stan}. Stan is a probabilistic programming language for performing HMC.
HMC allows for efficient MCMC, capable of computing marginal posterior distributions for complex Bayesian models. Stan also permits the drawing of samples from the posterior predictive distribution.
For more information about Stan, see \cite{Stan}; for more information about HMC, see \cite{Betancourt}.

\subsection{Application I: Ranking Systems}
The first application we present is that of a ranking system based on the log-strengths. Such a system could generate weekly or monthly rankings more nuanced
than that provided by simple win-loss comparisons. Much of our motivation for treating the teams objectively is related to this application; a reliable ranking system should be based on team
performance alone. The Bayesian approach permits us to assign ranks based on $\mathbb{E}[\lambda_i \given \mathbf{V}]$, which integrates over possible outcomes rather than finding an optimum based on the data.
Table \ref{table:ranks} shows the final rankings from the 2017 season, with teams in \textbf{boldface} having made the postseason. Note that teams may be compared via the distance between their
respective log-strengths (which correspond to ratios in $\boldsymbol{\pi}$-space). In table \ref{table:mleranks}, we compare these results to those found by maximum likelihood estimation. This aptly
illustrates the effect of the Bayesian approach. The prior serves as a regularizer and promotes shrinkage, protecting against over-fitting. Unsurprisingly, $\mathbb{E}[\lambda_i \given \mathbf{V}]$ is more correlated with a team's true record than is $\widehat{\lambda}_i^\text{MLE}$.

\begin{figure}
    \centering
    \begin{subfigure}[b]{0.49\textwidth}
        \includegraphics[width=\textwidth]{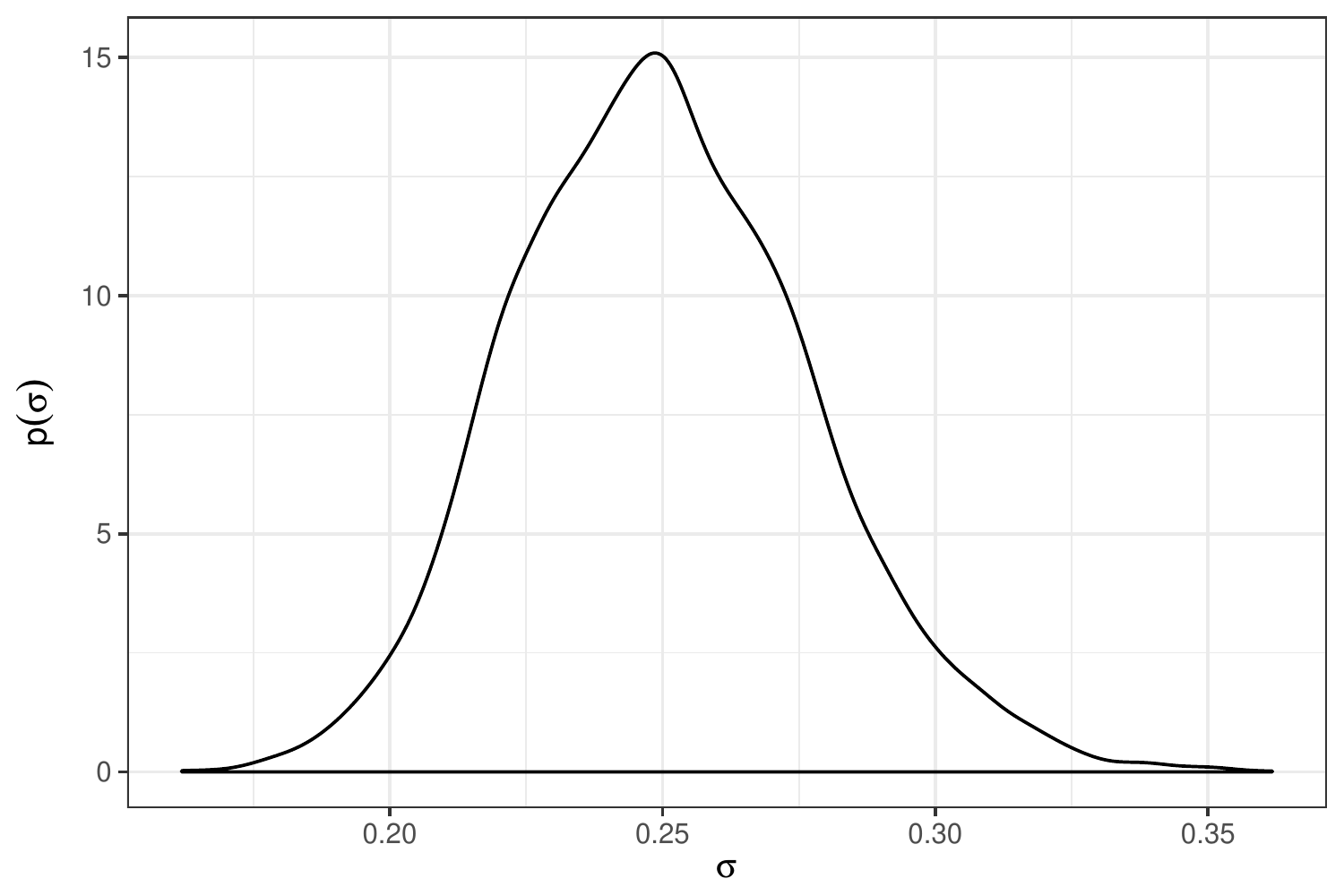}
        \caption{$p(\sigma \given \mathbf{V})$}
        \label{fig:sigmapost}
    \end{subfigure}
    \begin{subfigure}[b]{0.49\textwidth}
        \includegraphics[width=\textwidth]{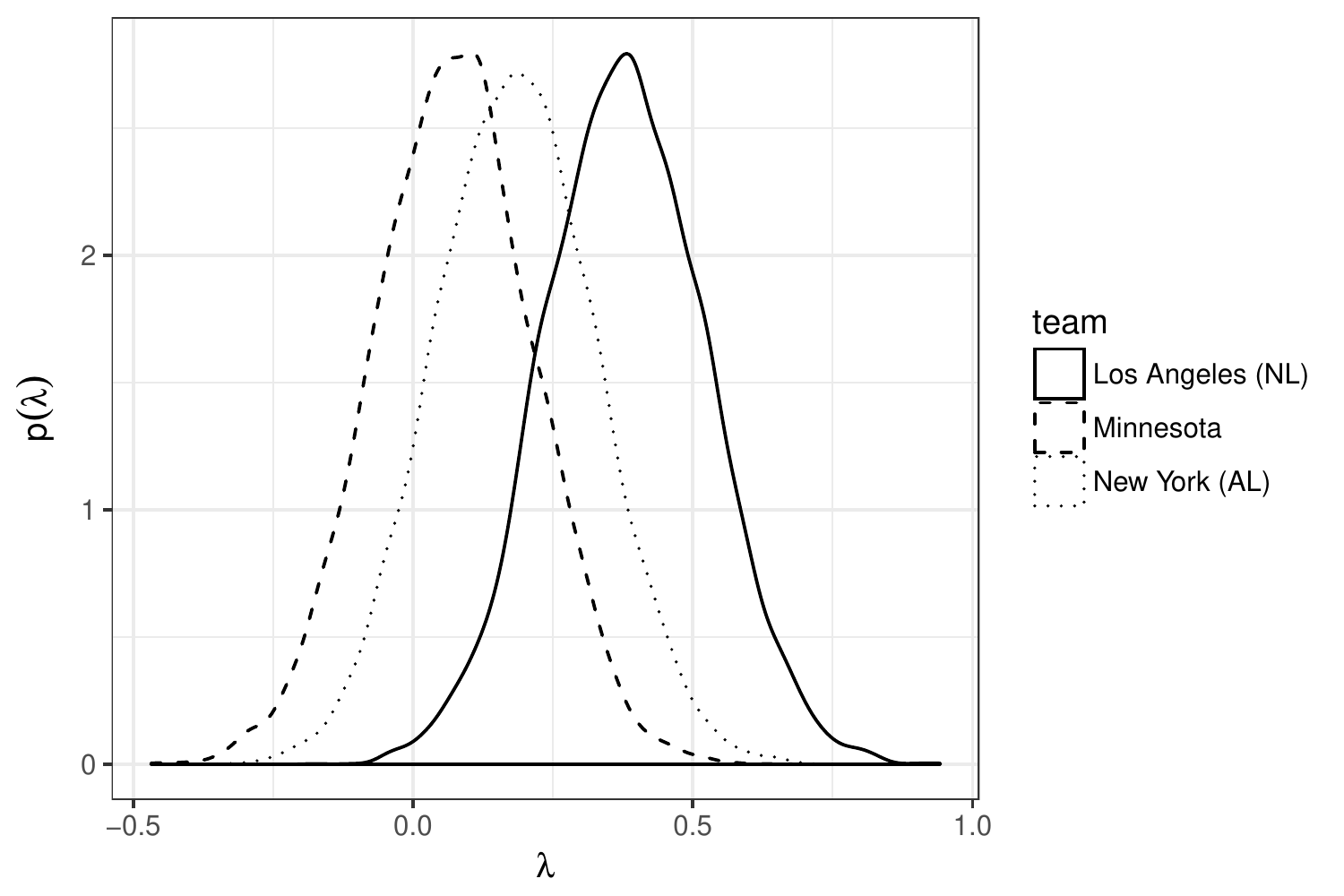}
        \caption{$p(\lambda_{\text{LAN}} \given \mathbf{V})$ vs. $p(\lambda_{\text{MIN}} \given \mathbf{V})$ vs. $p(\lambda_{\text{NYA}} \given \mathbf{V})$}
        \label{fig:comparepost}
    \end{subfigure}
    \caption{The left image shows the marginal posterior density for $\sigma$ during the 2017 season. As expected, it looks much like our informative hyperprior.
      The right image compares the marginal posterior densities of three postseason teams from 2017. The superiority of the NL-champion Dodgers is clear.}
    \label{fig:2016}
\end{figure}

\begin{table}[ht] \centering
    \caption{Final 2017 Major League Baseball rankings based on hierarchical Bayesian Bradley-Terry. Whereas $\widehat{\lambda}_i^{\text{MLE}}$ is found by optimizing the likelihood, 
      $\mathbb{E}[\lambda_i \given \mathbf{V}]$ is found by integrating over a posterior probability density.}
\begin{tabular}[t]{rlrr} 
  \hline
 & team & $\mathbb{E}[\lambda_i \given \mathbf{V}]$ & wins \\ 
  \hline
1 & \textbf{LAN} & 0.38 & 104 \\ 
  2 & \textbf{CLE} & 0.35 & 102 \\ 
  3 & \textbf{HOU} & 0.35 & 101 \\ 
  4 & \textbf{WAS} & 0.22 & 97 \\ 
  5 & \textbf{BOS} & 0.21 & 93 \\ 
  6 & \textbf{ARI} & 0.20 & 93 \\ 
  7 & \textbf{NYA} & 0.18 & 91 \\ 
  8 & \textbf{CHN} & 0.16 & 92 \\ 
  9 & \textbf{COL} & 0.10 & 87 \\ 
  10 & \textbf{MIN} & 0.07 & 85 \\ 
  11 & MIL & 0.07 & 86 \\ 
  12 & SLN & 0.02 & 83 \\ 
  13 & TBA & 0.00 & 80 \\ 
  14 & LAA & 0.00 & 80 \\ 
  15 & KCA & -0.01 & 80 \\ 
   \hline
\end{tabular}
\begin{tabular}[t]{rlrr}
  \hline
 & team &  $\mathbb{E}[\lambda_i \given \mathbf{V}]$ & wins \\ 
  \hline
  16 & SEA & -0.04 & 78 \\ 
  17 & TEX & -0.04 & 78 \\ 
  18 & TOR & -0.06 & 76 \\ 
  19 & BAL & -0.08 & 75 \\ 
  20 & OAK & -0.09 & 75 \\ 
  21 & MIA & -0.10 & 77 \\ 
  22 & PIT & -0.12 & 75 \\ 
  23 & SDN & -0.17 & 71 \\ 
  24 & ATL & -0.19 & 72 \\ 
  25 & NYN & -0.21 & 70 \\ 
  26 & CHA & -0.22 & 67 \\ 
  27 & CIN & -0.22 & 68 \\ 
  28 & DET & -0.27 & 64 \\ 
  29 & PHI & -0.28 & 66 \\ 
  30 & SFN & -0.28 & 64 \\ 

    \hline
\end{tabular}
\label{table:ranks}
\end{table}

  \begin{table}[ht] \centering
    \caption{Rankings from the 2017 season using maximum likelihood estimates. Maximum likelihood tends to produce results that diverge more
    from a team's actual record. }
\begin{tabular}[t]{rlrr}
  \hline
 & team & ${\widehat{\lambda}_i}^{\text{MLE}}$ & wins \\ 
  \hline
1 & \textbf{CLE} & 0.52 & 102 \\ 
  2 & \textbf{HOU} & 0.51 & 101 \\ 
  3 & \textbf{LAN} & 0.50 & 104 \\ 
  4 & \textbf{BOS} & 0.33 & 93 \\ 
  5 & \textbf{NYA} & 0.29 & 91 \\ 
  6 & \textbf{WAS} & 0.26 & 97 \\ 
  7 & \textbf{ARI} & 0.26 & 93 \\ 
  8 & \textbf{CHN} & 0.19 & 92 \\ 
  9 & \textbf{MIN} & 0.13 & 85 \\ 
  10 & \textbf{COL} & 0.12 & 87 \\ 
  11 & MIL & 0.07 & 86 \\ 
  12 & TBA & 0.05 & 80 \\ 
  13 & LAA & 0.03 & 80 \\ 
  14 & KCA & 0.02 & 80 \\ 
  15 & SLN & 0.00 & 83 \\ 

   \hline
\end{tabular}
\begin{tabular}[t]{rlrr}
  \hline
 & team &  ${\widehat{\lambda}_i}^{\text{MLE}}$ & wins \\ 
  \hline
  16 & SEA & -0.02 & 78 \\ 
  17 & TEX & -0.03 & 78 \\ 
  18 & TOR & -0.04 & 76 \\ 
  19 & BAL & -0.06 & 75 \\ 
  20 & OAK & -0.09 & 75 \\ 
  21 & PIT & -0.18 & 75 \\ 
  22 & MIA & -0.19 & 77 \\ 
  23 & SDN & -0.25 & 71 \\ 
  24 & CHA & -0.27 & 67 \\ 
  25 & ATL & -0.30 & 72 \\ 
  26 & CIN & -0.33 & 68 \\ 
  27 & NYN & -0.34 & 70 \\ 
  28 & DET & -0.34 & 64 \\ 
  29 & SFN & -0.41 & 64 \\ 
  30 & PHI & -0.43 & 66 \\ 

    \hline
\end{tabular}
\label{table:mleranks}
\end{table}
    
\subsection{Application II: Predictive Modeling}
Bayesian probability offers a particularly elegant way of handling prediction. For our model, the posterior predictive distribution is given by
\begin{align}
  p(\tilde{\mathbf{V}} \given \mathbf{V}) = \int_{\mathbb{R}^N}  p(\tilde{\mathbf{V}} \given \boldsymbol{\lambda})p(\boldsymbol{\lambda} \given \mathbf{V}) \diff \boldsymbol{\lambda},
\end{align}
which integrates over all uncertainty in the model and gives a distribution over unobserved data $\tilde{\mathbf{V}}$ conditional on the observed data
$\mathbf{V}$. The point estimate used for prediction is $\mathbb{E}[\tilde{\mathbf{V}} \given \mathbf{V}]$.
The accuracy of the predictive distribution can be readily measured by splitting the sample. We fit the data in a given season up to a certain date, and predict team records for the remainder of the
season. This can of course be validated against the known outcomes. In
machine learning terminology, the date at which we partition the data represents the separation between the training and the test sets. We can define a loss function, or error metric, to
evaluate the overall validity of this approach, and compare it to predictions based on generating samples from maximum likelihood estimates alone. The respective error metrics for a given
team are
\begin{equation}
  \begin{split}
    \mathrm{error}_i^{\text{Bayes}} & = \left | \mathbb{E} \left [ \tilde{V}_i^{\text{test}} \given V_i^{\text{train}} \right ] - V_i^{\text{test}} \right | \\
    \mathrm{error}_i^{\text{MLE}} & = \left | \mathbb{E} \left [ \tilde{V}_i^{\text{test}} \hspace{1mm};\hspace{1mm} \widehat{\lambda}_i^{\text{train}} \right ] - V_i^{\text{test}} \right |.
  \end{split}
\end{equation}
In words, these are the absolute distances between the predicted wins and actual wins in the test set. An overall error metric can be given by
\begin{equation}
  \begin{split}
    \mathrm{error}^{\text{Bayes}} & = \frac{1}{N} \sum_{i=1}^{N} \left ( \mathrm{error}_i^{\text{Bayes}} \right ) \\
    \mathrm{error}^{\text{MLE}} & = \frac{1}{N} \sum_{i=1}^{N} \left ( \mathrm{error}_i^{\text{MLE}} \right ), \\
  \end{split}
\end{equation}
the means of the individual error metrics. Similarly,
\begin{equation}
  \begin{split}
    \mathrm{sd}^{\text{Bayes}} & = \sqrt{ \frac{1}{N} \sum_{i=1}^{N} \left ( \mathrm{error}_i^{\text{Bayes}} - \mathrm{error}^{\text{Bayes}}\right )^2 } \\
    \mathrm{sd}^{\text{MLE}} & = \sqrt{ \frac{1}{N} \sum_{i=1}^{N} \left ( \mathrm{error}_i^{\text{Bayes}} - \mathrm{error}^{\text{MLE}}\right )^2 } \\
  \end{split}
\end{equation}
measures how variable these estimates are. The results for the 2017 season are shown in table \ref{table:errors}. In figure \ref{fig:predictions}, we
plot at each partition date the average overall predictive metrics during the seasons 2011 - 2017. With sufficient data, the two methods perform similarly.
However, the Bayesian approach offers much-improved performance when data is scarce. 
This is true for both error and variability. In this sense, our model is preferable early in the season,
and continues to outperform MLE-based predictions into July, after which the two methods begin to converge in
accuracy. Without doubt, higher accuracy could be achieved under different approaches if that were the sole goal; we aim to strike a balance between our two described applications.
\begin{table}
  \caption{Comparison of error rates from predictions based on hierarchical Bayesian Bradley-Terry and maximum likelihood for the 2017 season.
    ``Partition'' indicates where the data was split
    into a training and test. The Bayesian approach performs significantly better during the first half of the season.}
\centering
\begin{tabular}{crrrrr}
  \hline
 partition & $\mathrm{error}^{\text{Bayes}}$  & $\mathrm{error}^{\text{MLE}}$ & $\mathrm{sd}^{\text{Bayes}}$ & $\mathrm{sd}^{\text{MLE}}$ \\ 
  \hline
Apr15 & 8.82 & 24.65 & 6.58 & 17.34 \\ 
  May1 & 7.31 & 12.49 & 6.17 & 10.39 \\ 
  May15 & 6.20 & 9.84 & 5.68 & 5.84 \\ 
  Jun1 & 4.72 & 6.90 & 4.87 & 4.27 \\ 
  Jun15 & 4.32 & 4.81 & 4.65 & 4.79 \\ 
  Jul1 & 4.04 & 4.17 & 3.46 & 3.43 \\ 
  Jul15 & 3.90 & 4.01 & 3.72 & 3.96 \\ 
  Aug1 & 3.58 & 3.89 & 3.17 & 3.15 \\ 
  Aug15 & 3.32 & 3.26 & 2.74 & 2.91 \\ 
  Sep1 & 2.57 & 2.59 & 2.31 & 2.39 \\ 
  Sep15 & 1.75 & 1.83 & 1.14 & 1.11 \\ 

   \hline
\end{tabular}
\label{table:errors}
\end{table}
\newpage
  \begin{figure}
    \centering
    \begin{subfigure}{.8\textwidth}
        \includegraphics[width=\textwidth]{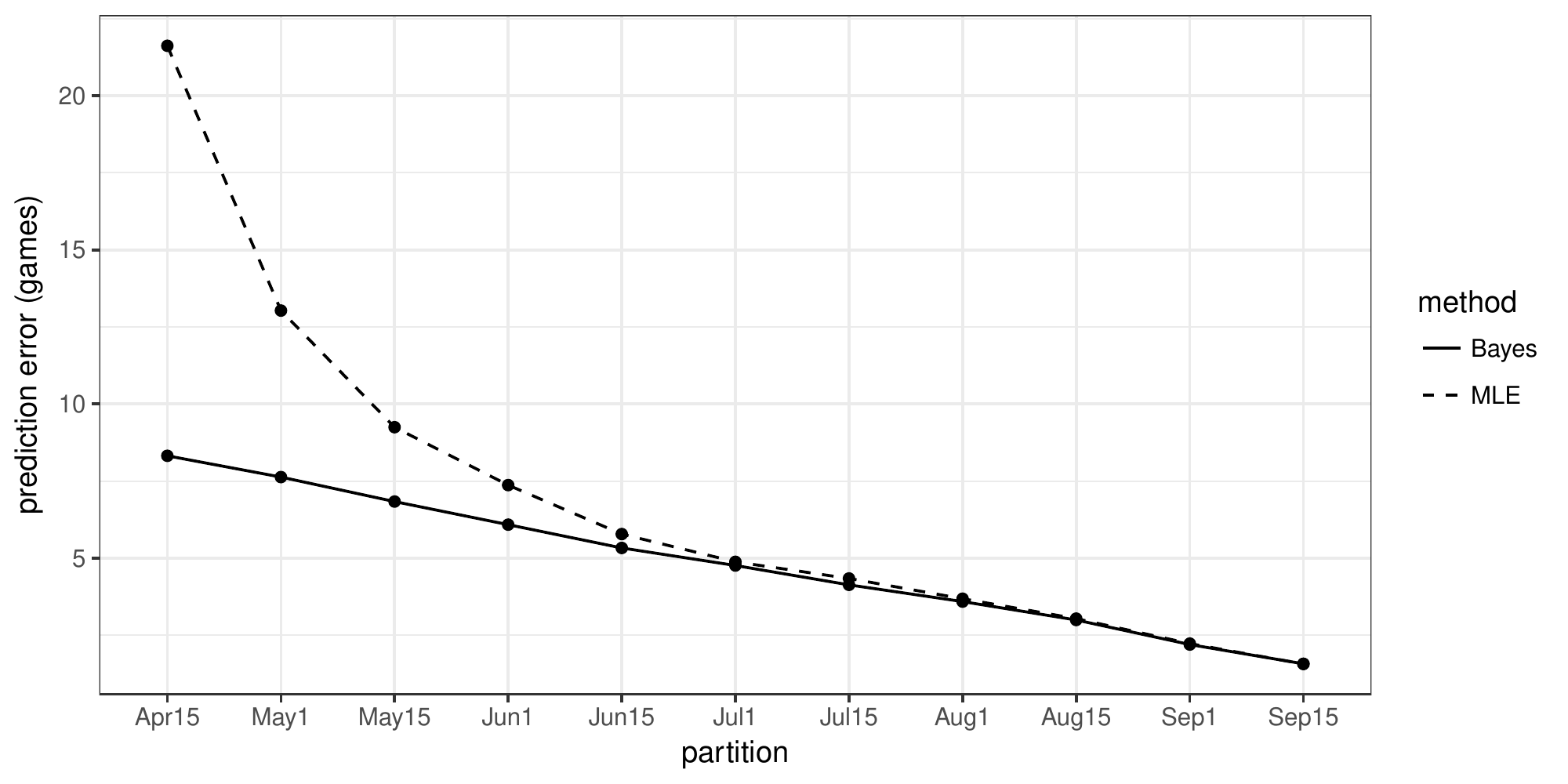}
                    \end{subfigure}
      \begin{subfigure}{.8\textwidth}
        \includegraphics[width=\textwidth]{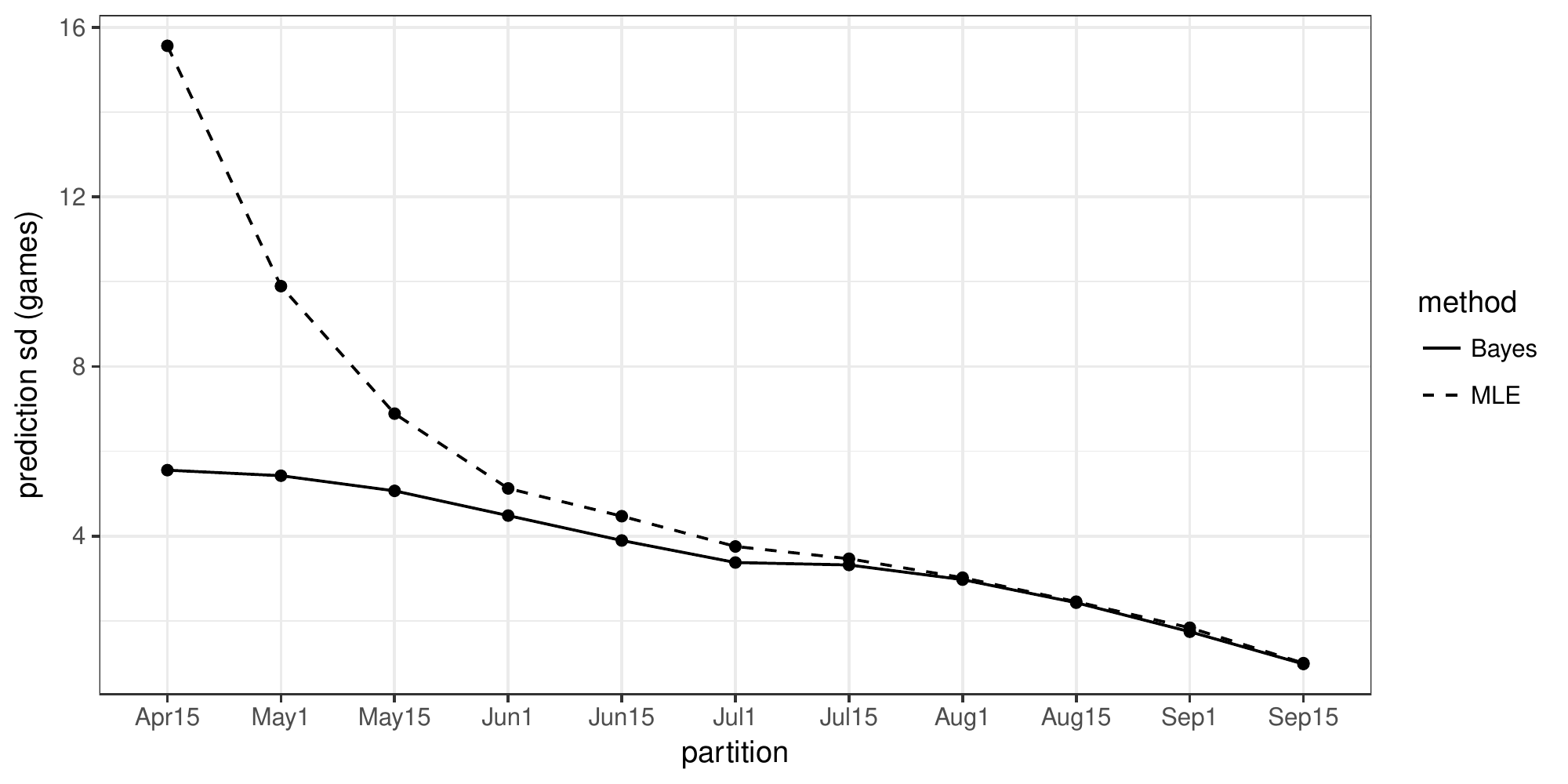}
                    \end{subfigure}  
      \caption{A comparison of predictions based on $\mathbb{E}[\tilde{\mathbf{V}}^{\text{test}} \given \mathbf{V}^{\text{train}}]$ and
        $\mathbb{E}[\tilde{\mathbf{V}}^{\text{test}} \hspace{.5mm};\hspace{.5mm}\widehat{\lambda}_i^{\text{train}}]$,
        averaged across the seasons 2011 - 2017.
        ``Partition'' indicates when the data was split into a training and test set. In general, Bayesian Bradley-Terry matches or beats the performance of MLE-based prediction for the entirety of a season,
      in terms of both error rate and error variability.}
    \label{fig:predictions}
  \end{figure}

\section{Conclusions}
Our proposed Bayesian Bradley-Terry model provides a useful and coherent framework for assessing and predicting the performance of Major League Baseball teams. By adhering to Whelan's desiderata \cite{Whelan},
we construct a hierarchical model that is weakly informative at the level of individual teams, but includes prior knowledge with respect to the entire league. This permits a model
that combines the subjective and objective approaches to Bayesian inference, capable of for use in a range of applications. Specifically, we demonstrate the merit of Bayesian Bradley-Terry
in applications to ranking and prediction, finding a balance between inferring latent structure and making respectable forecasts.
In both cases, our model outperforms maximum likelihood estimation by integrating over uncertainty, preventing over-fitting.
In summary, hierarchical Bayesian Bradley-Terry offers good performance in application, while being simple, interpretable, and compliant with the desirable properties of the desiderata.

\bibliographystyle{acm}
\end{document}